\def\maxwidth{ %
  \ifdim\Gin@nat@width>\linewidth
    \linewidth
  \else
    \Gin@nat@width
  \fi
}
\definecolor{fgcolor}{rgb}{0.345, 0.345, 0.345}
\definecolor{shadecolor}{rgb}{.97, .97, .97}
\definecolor{messagecolor}{rgb}{0, 0, 0}
\definecolor{warningcolor}{rgb}{1, 0, 1}
\definecolor{errorcolor}{rgb}{1, 0, 0}
\newenvironment{knitrout}{}{} % an empty environment to be redefined in TeX
\newcommand{\bx}{\bm{X}}
\newcommand{\ncm}{n_{Cm}}
\newcommand{\ntm}{n_{Tm}}
\newcommand{\bycm}{\bm{y}_{Cm}}
\newcommand{\est}{\hat{\tau}_{\text{rebar}}}
\newcommand{\yci}{y_{Ci}}
\newcommand{\yti}{y_{Ti}}
\newcommand{\yhat}{\hat{y}_C}
\newcommand{\EE}{\mathbb{E}}
\newcommand{\Match}{M}
\newcommand{\match}{m}
\newcommand{\algorithm}{\hat{y}_C(\cdot)}
\newcommand{\covMat}{\bm{X}}
\newcommand{\covVec}{x}
\newtheorem{prop}{Proposition}
\newtheorem{lemma}{Lemma}
\newtheorem{remark}{Remark}
\title{Rebar: Reinforcing a Matching Estimator with Predictions from
  High-Dimensional Covariates}
\author{Adam Sales \& Ben B Hansen \& Brian Rowan}
\begin{document}

\maketitle

\section{Introduction: Two Types of Neglected Data}

Matching-based observational studies in
education sciences often neglect data from the ``remnant'' of a match:
untreated and un-matched subjects.
That is, researchers
will select a set of matched controls that most closely resemble the
treated subjects, and discard data from the remnant, the unmatched
controls.

Similarly, due to sample size and other modeling limitations,
researchers will typically condition their experimental and
observational studies on a small set of pre-treatment covariates that
are deemed most relevant to the study---the variables thought most likely to
pose a confounding threat.
In many cases, reams of less-relevant data are available, perhaps from
state longitudinal data systems or from other sources.
These less relevant covariates are often discarded.

Conducting a causal analysis using only the
matched sample and using only relevant covariates makes good
statistical sense. The data from subjects that are not
part of a match are likely to be distributed
differently than data from the match.
The process of matching encourages researchers to focus their analysis
on the region of common support; the remnant is typically
outside this region by construction.
Including irrelevant variables into an analysis can swamp the sample,
introduce over-fitting or extreme imprecision, and make
impossible common statistical techniques such as ordinary least
squares and logistic regression.

But these excluded data---the remnant and ostensibly irrelevant
covariates---may also contain valuable information.
Perhaps the distribution of the outcome conditional on covariates
could be estimated with more precision by vastly increasing the sample
size using discarded subjects.
Perhaps discarded covariates are not so irrelevant, and capture important baseline differences
between treated and untreated subjects.

This paper is an attempt to thread this needle with a new method that
we call ``remnant-based residualization,'' or ``rebar.''
The idea of rebar is to, on the one hand, extract as
much useful information as possible from the remnant and all available
covariates, and on the other hand to preserve the
most attractive properties of a good matching design.
To implement rebar, we fit a machine learning prediction model to the
unmatched controls---the ``remnant''---predicting
their outcomes in the control condition as a function of the entire
set of covariates.
Using this fitted model, we then generate predicted outcomes for the
matched sample.
Finally, instead of calculating the effect of the treatment on
participants' outcomes themselves, we estimate the intervention's
effect on the difference between participants' predicted outcomes
under the control condition, and their actual outcomes, i.e. their
prediction residuals---this is ``residualization.''
The predictive model need not be correct in any sense, or
consistent or unbiased for any particular parameter.
It must only yield predictions that are closer, on average, to control potential
outcomes than their mean.

Rebar builds thematically on prior work combining matching with
outcome modeling, such as \citet{rubin:1973b} and \citet{ho:etal:2007}, among others, alongside ``doubly robust'' estimation \citep[e.g.][]{kang2007demystifying}.
% the following (til the end of the para) is quoted in full in
% response to rev. 3
Its most direct antecedents are the papers of \citet{rosenbaum2002covariance} and
\citet{biasAdjust}, which suggest forms of residualization
for matching estimators, and of \citet{middleton2011unbiased}, which does
the same for weighting estimators.
Our contribution to that literature is twofold: first, rebar is
\emph{remnant-based}: we argue here that residualization is well
suited to recovering otherwise lost information from the remnant.
Second, we demonstrate by simulation and example how rebar can exploit machine
learning methods and high dimensional covariates without compromising
the classical statistical properties of the match.

Rebar can supplement a wide range of matching
analyses, and may be used alongside other outcome models
and covariate adjustments.

The following section will review causal matching studies, and Section \ref{sec:rebarIntro} will formally introduce rebar.
There, we will discuss a possible threat to the validity of a matching design that rebar can introduce: if the distribution of outcomes, conditional on covariates, differs widely enough between the remnant the matched set, rebar might increase, rather than decrease bias.
We will introduce a diagnostic called ``proximal validation'' that should detect such pathological cases, and suggest ways to tweak the algorithm if a researcher were to confront one.

Rebar can potentially reduce both the bias and the variance of causal estimates, by modeling otherwise unmodeled variation.
That said, this paper will focus its attention on rebar's bias reducing properties.
We will argue, with analytical results (Section \ref{sec:bias}), a simulation study (Section \ref{sec:sim}), and an empirical example (Section \ref{sec:example}) that rebar is an effective method for reducing confounding bias from measured, but unmodeled, confounders in a high-dimensional dataset, without compromising the key advantages of matching.

\section{Matching in Observational Studies: Review}\label{sec:matchingReview}

In an observational study, let $i=1,...,n$ index $n$ subjects, and let $Z_i$ denote subject $i$'s binary treatment assignment, and $Y_i$ subject $i$'s observed outcome of interest.
Assuming non-interference \citep{cox:1958}, and following \citet{neyman:1990} and \citet{rubin:1974}, let $y_{Ti}$ and $y_{Ci}$ denote subject $i$'s (perhaps counterfactual) responses were subject $i$ treated and untreated, respectively.
Then $Y_i=y_{Ti}Z_i+y_{Ci}(1-Z_i)$.
Further, let $\covVec_i$ be a vector of covariates measured prior to treatment.
The potential outcomes $y_C$ and $y_T$ define treatment effects $\tau_i=\yti-\yci$ and a causal estimand
\begin{equation}\label{eq:ett}
\tau_{ETT}=\EE_Z[\bm{\tau}^T\bm{Z}/n_T]=\frac{\bm{\tau}^T\EE\bm{Z}}{n_T},
\end{equation}
the expected average effect of the treatment on the treated.
The expectation in (\ref{eq:ett}) is taken conditional on the posited sampling scheme.

In a matching-based observational study, a researcher will create a
new categorical variable, $\mathbf{\Match}$, considering subjects $i$
and $j$ to be \emph{matched} %``matched''
to one another if $\Match_i=\Match_j$.
(Subjects $i$ with the property that $M_i \neq M_j$ for all $i\neq j$ are \textit{unmatched}.)
Researchers will choose $\Match$ in such a way that matched subjects
have similar covariate distributions $\covVec$.
Perhaps the most popular approach to matching is to use propensity scores \citep{rosenbaum1983central},
$Pr(Z=1|\covVec)$, the probability of
being assigned to treatment conditional on covariates $\covVec$.
In a propensity-score matching design, treated and untreated subjects
are grouped into matches $\Match$ with approximately equal
estimated propensity scores.
Other inexact matching techniques measure subjects' similarity in
$\covVec$ using, for example, Mahalanobis distances \citep{rubin1980bias}
or covariate balance tests \citep{diamond2013genetic}.
Matched sets may contain any (positive) number of treated or untreated
subjects \citep{rosenbaum:1991a}.

Ideally, within any matched set, no subject's \textit{a priori} probability of making its way into the treatment
group was larger or smaller than any other's:
\begin{equation}\label{assumption:experiment}
Pr(Z_i=1| \bm{\Match})=Pr(Z_{j}=1|\bm{\Match})  \text{ whenever } \Match_i=\Match_{j} ;
\end{equation}
this is \textit{perfect matching}.  Under perfect matching in the sense of (\ref{assumption:experiment}), matched comparisons are statistically equivalent to contrasts of treatment and control conditions in block- or paired randomized designs \citep[e.g.,][]{braitman2002roc,rubin2008objective,hansen2011propensity}.
%Let $\ntm$ and $\ncm$ be the number of treated and untreated subjects in match $m$, $\sum_{i: \match_i=m} Z_i$ and $\sum_{i: \match_i=m} (1-Z_i)$, respectively.
%Then for any subject $i$ with $\match_i=m$, $Pr(Z_i=1|\ntm,\ncm)=\ntm/(\ntm+\ncm)$.

A simple matching-based estimator compares average treated and
untreated outcomes within each match.
The average difference between treated and untreated subjects in matched set
$m$ is
\begin{equation*}
t(\bm{Y}_m,\bm{Z}_m)=\frac{\bm{Y}_m^T\bm{Z}_m}{\ntm}-\frac{\bm{Y}_m^T(\bm{1}-\bm{Z}_m)}{\ncm}
\end{equation*}
where $\bm{Y}_m$ and $\bm{Z}_m$ are vectors of $Y$ and $Z$, and $\ntm$ and $\ncm$ are the numbers of treated and untreated among subjects $\{ i : \bm{\Match}_i=m\}$.
Then a matching estimator is
\begin{equation}\label{MatchingEstimator}
\hat{\tau}_\Match (\bm{Y})=\sum_m w_m t_m(Y,Z)
\end{equation}
where weight $w_m=\ntm/n_T$.
Estimator $\hat{\tau}_\Match (\bm{Y})$ is unbiased for $\tau_{ETT}$ under perfect matching
(\ref{assumption:experiment}), or, more generally, if the difference in
assignment probabilities is uncorrelated with control potential
outcomes (Lemma \ref{prop:matchBias} in the appendix).
In practice neither of these will be exactly true, but researchers can hope for approximate unbiasedness, and explore their design's sensitivity to unmeasured (or unmodeled) bias \citep[e.g.][]{gastwirth1998das,hosmanetal2010}.

Frequently, subjects who are not sufficiently
similar in $\covVec$ to other units are left unmatched.
We will refer to the set of unmatched untreated subjects as the
\emph{remnant} %``remnant''
from a match.
Typically, the remnant is discarded.
While discarding data might seem unwise, there is good reason to discard the remnant.
Since no suitable comparisons may be found between subjects in the remnant and treated subjects, any causal comparisons using the remnant necessarily involve modeling $y_C$ as a function of $\covMat$.
Moreover, the remnant typically occupies a mostly separate region of the distribution of $\covMat$ than the matched sample---hence its inability to be matched.
Therefore, comparing outcomes from treated subjects with those from
the remnant involves extrapolation, which can be highly sensitive to
model specification. On the other hand, the remnant may contain information that is useful for modeling $y_C$.

An extensive, occasionally contentious literature discusses variable selection for propensity
score models.
This literature begins with Rubin and Thomas, who advised
erring on the side of inclusiveness, striving to exclude only those
covariates that a consensus of researchers believe to be unrelated to each outcome
variable \citeyearpar[\S~2.3]{rubin:thom:1996}; Rosenbaum's
\citeyearpar[p.76]{rosenbaum:2002} view is similar.  Later contributions
argued that including variables only weakly related to outcomes may
increase the mean squared error (MSE) of effect estimation \citep{brookhart2006variable,austin2011introduction}.  These
additional losses can in principle take the form of bias,
not only variance, even if the MSE-increasing variable was determined in
advance of treatment assignment
\citep{greenland2003qbc,sjolander2009propensity,pearl2009letter}. %,
%although case studies suggest
%these types of bias are often small \citep{liu2012implications,ding2015adjust}.
Most recently,
\citet{steiner2015bias} argued via case study for including all
available covariates, unless ``strong substantive theory'' (p. 573)
suggests the presence of bias-amplifying covariates; ideally,
researchers should include covariates from multiple domains, with each domain including as many
covariates as possible. \citet{pimentel2016constructed} suggested
conducting two analyses, each matching on a different set of covariates.
Methods
attempting to limit the MSE penalty by limiting propensity modeling
variables to those that correlate with observed outcomes have
been met with criticism of a different nature: In Rubin's view, in
order to maximize objectivity, during matching researchers
should keep outcome measurements in a virtual locked box, only to
emerge once the matching structure and other study design elements
have been determined \citep{rubin2008objective}.

Rebar, the method of this paper, is compatible with either attitude to
selection of propensity score variables; our illustration (\S~\ref{sec:example}) emphasizes this
compatibility by adhering to the more restrictive of the two schools.
Without reference to outcome associations, we select for inclusion in
the propensity model those variables we felt that a consensus of
scholars would be most likely to deem potential confounders.  In this
example as in many others, the number of
potential confounders that could  be addressed in this way was
limited: when $p\ge n_T$ or $p\ge n_{C}$, then the
treatment and control samples can ordinarily be separated by
a hyperplane, in the space spanned by
$\bx$, with the result that common binary regression methods
fail to fit \citep{agresti2002categorical,zorn2005solution}; in the
example of \S~\ref{sec:example}, $n_{T}=7$. This heightens the need for additional
measures for confounder control, such as rebar.

\section{Rebar: Using an Outcome Model to Reduce Bias in a Matching Design}\label{sec:rebarIntro}

The procedure we recommend is the following:
\begin{enumerate}
  \item Using the full dataset, construct a match $\bm{\match}$, perhaps
    based on a subset of available covariates, thereby dividing the sample
    into a matched sample and a remnant.
  \item Using units in the remnant, construct an algorithm
    $\algorithm$ to predict $y_C$ as a function of the full matrix $\covMat$.
  \item Assess the performance of $\algorithm$ (See Section \ref{sec:remnant})
  \item For all subjects  $i$ in the matched sample, use $\algorithm$
    to predict $y_{Ci}$ as $\hat{y}_{Ci}=\hat{y}_{C}(\covVec_i)$.
  \item Construct prediction errors $e\equiv Y-\hat{y}_C(\covMat)$ for all
    subjects in the matched sample.
  \item Estimate treatment effects in the matched sample,
    substituting $e$ for $Y$ in the outcome analysis.
\end{enumerate}
As in \citet{rosenbaum2002covariance}, the model $\algorithm$ relating $\covMat$ and $y_C$ is an algorithmic model, rather than a statistical model.
That is, it does not estimate parameters of a probability distribution, but rather generates deterministic predictions of $y_C$ when given a vector $\covVec$.
Since this procedure relies on the residuals of a model fit to $Y$, we
will refer to it as ``residualization.''

The predictions $\yhat (\covVec)$ bear some similarity to prognostic
scores \citep{hansen:2008biometrika}.
Prognostic scores, which are analogous to propensity scores, are
statistics that are sufficient for the relationship between $y_C$ and
$\covVec$.
They are commonly understood as predictions of $y_C$ as a function of
$\covVec$ \citep[e.g.][]{pane2013effectiveness}.
In fact, much of the intuition behind prognostic scores supports our
use of $\yhat (\covVec)$ here, though the prognostic score theory will not
play a direct role in our argument.

Now as above, define residuals
\begin{equation*}
e=Y-\yhat (\covVec).
\end{equation*}
Then we may define ``potential residuals'': $e_C=y_C-\yhat (\covVec)$ and $e_T=y_T-\yhat (\covVec)$.
Analogously to $Y$, the observed residuals are $e=Ze_T+(1-Z)e_C$.
Crucially,
\begin{equation}\label{eq:eDiff}
e_{Ti}-e_{Ci}=\tau_i ,
\end{equation}
where $\tau_i$ as above is subject $i$'s treatment effect, $\yti-\yci$.
To see this, note that $y_C=\hat{y}_C(\covMat)+e_C$ and $y_T=\hat{y}_C(\covMat)+e_T=\hat{y}_C(\covMat)+e_C+\tau$.
The prediction $\yhat (\covVec)$ is based only on pre-treatment variables
$\covVec$, and not on treatment status $Z$ from subjects in the matched sample.
That being the case, it cannot be affected by treatment status---we would counterfactually estimate the same $\yhat (\covVec)$ for alternative realizations of $\bm{Z}$ in the matched set.
Therefore, we can write $e_{Ti}-e_{Ci}=y_{Ti}-\hat{y}_{Ci}-(y_{Ti}-\hat{y}_{Ci})=y_{Ti}-y_{Ci}=\tau_i$: the treatment effect is manifest entirely in the residuals $e_C$ and $e_T$, and not at all in $\yhat (\covVec)$.

The prediction errors $e$, then, may replace $Y$ in an outcome analysis.
In particular, replace matched-set-specific treatment-control differences in $Y$, $t_m(Y,Z)$ with differences in $e$: $t_m(e,Z)$.
That is, let
\begin{equation*}
t_m(e,Z)=\bar{e}_{m, Z=1}-\bar{e}_{m, Z=0}=\frac{1}{n_{Tm}}\sum_{i:\match_i=m} e_i Z_i -\frac{1}{n_{Cm}}\sum_{i:\match_i=m} e_i (1-Z_i)
\end{equation*}
then define
\begin{equation}\label{estimator}
\est=\sum_m w_m t_m(e,Z)
\end{equation}
Residualization, then, means revising a
matching estimator by replacing outcomes $Y$ with observed value/$\algorithm$ differences; it aims to rid the
dependent variable of variation that is not informative about
treatment effects.
% the following 2 sentences are quoted in full in
% response to rev. 3
Rosenbaum \citeyearpar{rosenbaum2002covariance} precedes conventional hypothesis tests
with a residualization step, using observations within the matched sample
to fit the prediction model. If one instead trains one's
prediction algorithm $\algorithm$ using the remnant of the matching
procedure, the method becomes compatible with common estimation
(as well as hypothesis testing) techniques, and may offer larger
numbers of observations for training $\algorithm$.
Such \textit{remnant-based} residualization, briefly ``rebar,'' is the
topic of this paper.

%% Rebar can also be short for
%% ``reinforcement bar,'' a metal beam often embedded in concrete, which reinforces buildings from against a variety of threats.
%% We will argue that, analogously, rebar can reinforce the conclusions from a matching design.

%% Specifically, we will argue that using rebar to reinforce a matching design can reduce confounding bias; there is good reason to believe that it will reduce the sampling variability of matching estimators as well---as we will see in Sections \ref{sec:sim} and \ref{sec:example}---but we focus here on bias.
%% Briefly, the argument is that, by subtracting predictions $\yhat (\covVec)$ from outcomes Y, researchers free the estimate from variation in $Y$ that is not informative about treatment effects.
%% If this variation is correlated with treatment assignment $Z$, then rebar will reduce confounding bias.
%% These ideas will be made more precise in Section \ref{sec:bias}.

\subsection{Cross Validation and Proximal Validation: Assessing  $\algorithm$}\label{sec:remnant}

Using the remnant to model outcomes as a function of covariates affords the researcher a great deal of flexibility.
Researchers may use data from the remnant---both covariates and
outcomes---to attempt a variety of prediction techniques, and choose
the one which performs best.
This is particularly important when the dimension of $\covMat$ is large, so formulating statistical models based on theory or first principles is hard or impossible; a variety of methods must be attempted.
A useful tool in this regard is $k$-fold cross-validation
\citep{crossvalidation}, which can estimate the predictive accuracy of
a model using data from the training sample.
Cross-validation results may be examined for bias, variance, or other
measures of predictive performance, but Proposition~\ref{biasBound}
(below) suggests a
focus on prediction mean-squared-error.
In the rebar case, cross validation using data from the remnant can estimate $MSE_{\text{remnant}}=\EE_{i\in remnant} (\hat{y}_{Ci}-y_{Ci})^2$ or $R^2_{\text{remnant}}=1-MSE_{\text{remnant}}/Var_{\text{remnant}}(y_C)$.\footnote{In defining $MSE_{\text{remnant}}$ and $R^2_{\text{remnant}}$ thusly, we briefly depart from our convention of conditioning on potential outcomes and instead treat them as random, drawn from the same superpopulation as the remnant. $MSE_{\text{remnant}}$ and $R^2_{\text{remnant}}$ do not play a role in the theoretical development of rebar, but are useful heuristics in practice.}
These results can be used both to pick a modeling technique and to
pick tuning parameters.
After modeling choices have been made, researchers arrive at an
estimated prediction function $\algorithm:\mathbb{R}^p\rightarrow
\mathbb{R}$ that generates predictions $\hat{y}_C(\covMat)$ as a function of
covariates $\covMat$.

Cross-validation estimates an algorithm's predictive performance when
applied to new cases drawn from the same population as the training
set.
Of course, this is manifestly not the case for rebar.
Subjects in the matched sample are likely to be different from those
in the remnant; a model fit and cross-validated in the remnant may not perform as well in
the matched sample as that validation would suggest.
Write $\mathcal{S}_{\Match}$ to denote the matched sample, i.e. $\{i :
\exists j \neq i \text{ s.t. } M_{i} = M_{j}\}$. One expects $\mathrm{MSE}_\mathrm{remnant}$ to be less than
$\mathrm{MSE}_{\Match}=\{\sum_{i \in \mathcal{S}_{\Match}}
(\hat{y}_{Ci}-y_{Ci})^2\}/|\mathcal{S}_{\Match}|$, and  $R^2_\mathrm{remnant}$
to be less than $R^2_{\Match}$.
This is unfortunate but far from fatal---the more information a
prediction algorithm can learn about the matched sample from the
remnant the better rebar can reinforce a causal design.
Perfection is not necessary.

One does not expect $\mathrm{MSE}_{\Match}$ to exceed $\{\sum_{i \in \mathcal{S}_{\Match}}
(y_{Ci} -\overline{{y}_{C}}_{\mathcal{S}_{\Match}})^2\}/|\mathcal{S}_{\Match}|$,
although this can occur.
In such cases rebar could do more harm than good. Even with perfect
matching in the sense of~\eqref{assumption:experiment}, it could
diminish efficiency; and if~\eqref{assumption:experiment} is only
approximately true, rebar could increase bias as well.

Fortunately, simple diagnostic tools can identify such pathological cases.
Further, in many of \emph{those} cases there are simple modifications
to rebar that will improve its performance.
To illustrate a diagnostic that we call ``proximal validation,''
consider full matching within calipers of width $c_0$ in terms of a
continuous variable or index, such as the propensity score.
All control subjects within $c_0$ of a treated
subject are matched, with remaining controls constituting the remnant.
How well does an algorithm $\algorithm$ fit in the remnant perform in the matched sample?
To gauge $\algorithm$'s performance, a researcher will subdivide the remnant into two groups by using caliper $c_1>c_0$ to construct a new, larger matched set.
The cases in the remnant that are matched under the more permissive caliper $c_1$ are ``proximal'' cases---whether they are matched depends on the choice of caliper.
The cases that remain unmatched even under $c_1$ are ``distal'' cases, unmatchable under either scheme.
Proximal validation re-fits $\algorithm$ using only data from
subjects in the distal remnant, then examines its performance on the proximal portion of the remnant.
If $\algorithm$ performs poorly when extrapolated from the remnant to the matched set, it likely also performs poorly when extrapolated from distal cases to proximal cases within the remnant.
In other words, proximal validation is a way to gauge the performance of $\algorithm$ when its results are extrapolated in a way analogous to a matching design.

As compared to estimating $\mathrm{MSE}_{\Match}$ with rebar's MSE on
the control group, proximal validation permits the analyst to keep matched
subjects' outcomes in Rubin's \citeyearpar{rubin2008objective} virtual
locked box, even as the rebar model is being validated and improved.
Proximal validation is not limited to propensity-score full-matching designs with calipers; it may be used with any matching design that involves a quantitative restriction on allowable matches.
The procedure, in general, will be to slightly relax that restriction, choose a second, more expansive match, and use the results to divide the remnant into proximal and distal portions.

If $\algorithm$'s performance in proximal validation is discernibly
worse than its cross-validation performance, the rebar routine should be modified.
Suppose the mechanism selecting untreated units between the remnant and
the matched sample is matching based on an estimated propensity score.
In this case, the estimated propensity score itself can be incorporated into
the prediction model $\algorithm$---for instance, by including interaction terms
between the columns of $\covMat$ and $\hat{\pi}$.

Another useful diagnostic test is to check covariate balance on the predictions $\bm{\hat{y}}_C(\covMat)$.
Since $\hat{y}_C(\covMat)$ is a covariate, a successful matching
design will ensure that its distributions are similar among treated
and matched untreated subjects.  Even though $\hat{y}_C(\covMat)$ is a
constructed variable, its balance can be tested in the
same ways as balance on manifest variables, since the model behind it is fit without
reference to the matched sample.
If a balance test rejects the hypothesis of $\hat{y}_C(\covMat)$ balance, researchers may revise either the prediction algorithm $\algorithm$, the matching scheme, or both.

\section{Rebar's Effects on Bias}\label{sec:bias}
To see the potential of rebar to reduce the bias of a matching estimator, note that
the rebar estimator $\est$ can be expressed as the difference between two estimated treatment effects:
\begin{equation}\label{eq:rebarDiff}
\est = \hat{\tau}_\Match (\bm{Y})-\hat{\tau}_\Match (\bm{\hat{y}}_C)
\end{equation}
the matching estimator of the effect of the treatment on $Y$, minus an estimate of the effect of the treatment on $\hat{y}_C(\covMat)$.
To see this, note that:
  \begin{align*}
t_m(e,Z)=&\frac{1}{n_{Tm}}\sum_{i:\bm{\Match}_i=\match} e_i Z_i
-\frac{1}{n_{Cm}}\sum_{i:\bm{\Match}_i=\match} e_i (1-Z_i)\\
=&\left(\frac{1}{n_{Tm}}\sum_{i:\bm{\Match}_i=\match} Y_i Z_i
         -\frac{1}{n_{Cm}}\sum_{i:\bm{\Match}_i=\match} Y_i (1-Z_i)\right)-\\
&\left(\frac{1}{n_{Tm}}\sum_{i:\bm{\Match}_i=\match} \hat{y}_{Ci} Z_i
        -\frac{1}{n_{Cm}}\sum_{i:\bm{\Match}_i=\match} \hat{y}_{Ci} (1-Z_i)\right)\\
\equiv& \Delta Y_m -\Delta \hat{y}_{Cm}.
\end{align*}
The expression in (\ref{eq:rebarDiff}) follows by taking weighted averages of $\Delta Y_m$ and $\Delta \hat{y}_{Cm}$.
Of course, the treatment cannot have an effect on $\hat{y}_C(\covMat)$, which is a function of pre-treatment covariates and a separate sample; any observed ``effect'' of the treatment on $\hat{y}_C(\covMat)$ must be the result of covariate imbalance.

Two properties of the rebar estimate follow immediately.
%First,
\begin{prop}\label{prop:biasSum}
\begin{equation*}
bias(\est)=bias(\hat{\tau}_\Match (\bm{Y}))- \hat{\tau}_\Match (\bm{\hat{y}}_C)
\end{equation*}
\end{prop}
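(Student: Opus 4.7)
The plan is to derive the identity directly from the decomposition in (\ref{eq:rebarDiff}), which already established $\est = \hat{\tau}_\Match(\bm{Y}) - \hat{\tau}_\Match(\bm{\hat{y}}_C)$ as a pointwise equality in the realized treatment assignment $\bm{Z}$. The first thing to emphasize is that the predictions $\bm{\hat{y}}_C$ are fixed under the sampling distribution that defines $\tau_{ETT}$ in (\ref{eq:ett}): by construction (step~2 of Section~\ref{sec:rebarIntro}), $\algorithm$ is fit on the remnant and evaluated on the pre-treatment covariates $\covVec_i$ of the matched sample, so no component of $\hat{y}_{Ci}$ depends on $\bm{Z}$ restricted to that sample.

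With that in hand, the proof is essentially one step: take expectations of both sides of (\ref{eq:rebarDiff}) under the assignment distribution and subtract $\tau_{ETT}$. Linearity of expectation gives $\EE[\est] - \tau_{ETT} = \bigl(\EE[\hat{\tau}_\Match(\bm{Y})] - \tau_{ETT}\bigr) - \EE[\hat{\tau}_\Match(\bm{\hat{y}}_C)]$, i.e., $bias(\est) = bias(\hat{\tau}_\Match(\bm{Y})) - \EE[\hat{\tau}_\Match(\bm{\hat{y}}_C)]$. The proposition writes the last term simply as $\hat{\tau}_\Match(\bm{\hat{y}}_C)$ because this is the ``estimated treatment effect on a pretreatment quantity,'' which the text has already characterized as a deterministic measure of covariate imbalance; under a realized-error reading of ``bias'' (i.e., $\est - \tau_{ETT}$ for a given $\bm{Z}$), the equality holds on the nose from (\ref{eq:rebarDiff}) without any expectation at all.

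The only real obstacle is interpretive rather than computational: one must justify that $\bm{\hat{y}}_C$ can be treated as fixed relative to the assignment mechanism, so that linearity applies cleanly and the bias of the rebar estimator splits into the bias of the matching estimator and the ``imbalance'' term $\hat{\tau}_\Match(\bm{\hat{y}}_C)$. That separation is guaranteed by the remnant-based training specified in Section~\ref{sec:rebarIntro}, and no further calculation beyond linearity is needed.
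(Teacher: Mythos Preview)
Your proposal is correct and mirrors the paper's own justification: the paper does not give a standalone proof but simply states that the result ``follows immediately'' from the decomposition~(\ref{eq:rebarDiff}), and your argument is exactly the linearity-of-expectation elaboration of that step. Your observation about the interpretive tension---whether the last term should carry an expectation---is more careful than the paper itself, which writes $\hat{\tau}_\Match(\bm{\hat{y}}_C)$ without $\EE$ and treats it informally as a fixed measure of imbalance.
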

Viewing $\hat{\tau}_\Match (\bm{\hat{y}}_C)$ as an
estimate of $\est$'s bias, the effect of
residualization is to subtract from the matching estimator an estimate
of its bias.  (As with other bias correction methods, it backfires when the bias is poorly
estimated, an eventuality proximal validation aims to detect.)

%Next,
\begin{prop}\label{prop:noBias}
Under perfect matching (\ref{assumption:experiment}), $\est$ is unbiased for $\tau_{ETT}$.
\end{prop}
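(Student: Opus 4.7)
The plan is to combine the decomposition in~\eqref{eq:rebarDiff} with Proposition~\ref{prop:biasSum} and then argue that $\hat{\tau}_\Match(\bm{\hat{y}}_C)$ has expectation zero under perfect matching. By~\eqref{eq:rebarDiff} we already know $\est = \hat{\tau}_\Match(\bm{Y}) - \hat{\tau}_\Match(\bm{\hat{y}}_C)$, and the appendix lemma (Lemma~\ref{prop:matchBias}) supplies $\EE[\hat{\tau}_\Match(\bm{Y})] = \tau_{ETT}$ under~\eqref{assumption:experiment}. So all that remains is to show $\EE[\hat{\tau}_\Match(\bm{\hat{y}}_C)] = 0$ under the same assumption.

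The key observation, already invoked just after~\eqref{eq:eDiff}, is that the predictions $\hat{y}_{Ci}$ are fixed functions of pre-treatment covariates fit on the remnant, and thus do not depend on the assignment vector $\bm{Z}$ restricted to the matched sample. Conditional on $\bm{M}$, one can therefore treat $\hat{y}_{Ci}$ as a deterministic constant and take expectations over $\bm{Z}$. Writing
\begin{equation*}
\EE\bigl[t_m(\bm{\hat{y}}_C,\bm{Z}) \mid \bm{M}\bigr] = \frac{1}{n_{Tm}}\sum_{i:\Match_i=m} \hat{y}_{Ci}\,\EE[Z_i\mid \bm{M}] \;-\; \frac{1}{n_{Cm}}\sum_{i:\Match_i=m} \hat{y}_{Ci}\,\EE[1-Z_i\mid \bm{M}],
\end{equation*}
and using that under~\eqref{assumption:experiment} the common value of $\Pr(Z_i=1\mid\bm{M})$ within match $m$ must equal $n_{Tm}/(n_{Tm}+n_{Cm})$ (since the treated counts per match are fixed by design), the coefficients $1/n_{Tm}$ and $1/n_{Cm}$ cancel and the conditional expectation collapses to zero for every $m$. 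Averaging over matches with weights $w_m$ and then over $\bm{M}$ gives $\EE[\hat{\tau}_\Match(\bm{\hat{y}}_C)]=0$.

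Combining these pieces yields $\EE[\est] = \tau_{ETT} - 0 = \tau_{ETT}$, which is the claim. The mild subtlety worth flagging in the write-up is the independence of $\bm{\hat{y}}_C$ from the matched-sample treatment assignments: because $\algorithm$ is trained entirely on the remnant, it is a genuine pre-treatment quantity from the perspective of the matched sample, so the second estimator genuinely measures a ``treatment effect'' on a fixed covariate and inherits the within-match balance guaranteed by perfect matching. I do not anticipate any serious obstacles beyond making that conditioning argument clean; the rest is routine algebra already telegraphed by Proposition~\ref{prop:biasSum}.
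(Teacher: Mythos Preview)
Your proposal is correct and follows the same route as the paper: the paper's entire argument is the one-line remark that under~\eqref{assumption:experiment} one has $\EE[\hat{\tau}_\Match(\bm{Y})]=\tau_{ETT}$ and $\EE[\hat{\tau}_\Match(\bm{\hat{y}}_C)]=0$, which combined with~\eqref{eq:rebarDiff} gives the result. You have simply unpacked the second equality---showing the within-match cancellation via the common assignment probability $n_{Tm}/(n_{Tm}+n_{Cm})$ and the fact that $\hat{y}_{Ci}$ is fixed relative to $\bm{Z}$ on the matched sample---more carefully than the paper does, but the logic is identical.
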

This follows since, when treatment is essentially randomized within matches, $\EE \hat{\tau}_\Match (\bm{Y})=\tau_{ETT}$ and $\EE\hat{\tau}_\Match (\bm{\hat{y}}_C)=0$.
So in a successful matching design, rebar does not introduce bias.
Propositions \ref{prop:biasSum} and \ref{prop:noBias} hold for any
effect estimator $\hat{\tau} (\cdot)$ that is linear in outcomes $Y$,
i.e. for which (\ref{eq:rebarDiff}) holds.

\subsection{An Upper Bound on the Bias of the Rebar Estimator}\label{sec:upperBound}

The closer, on average, predictions $\hat{y}(\covVec)$ are to control potential outcomes in the matched set, the smaller the bias of $\est$ must be.

\begin{prop}\label{biasBound}
In a matching design, the squared bias of $\est$ can be bounded as

\begin{equation*}
bias(\est)^2\le MSE_{\Match}\times C(n,\bm{n_T},\bm{n_C})
\end{equation*}
where $MSE_{\Match}=\sum_{i\in \text{matched}} (\hat{y}_{Ci}-y_{Ci})^2/n_{\Match}$,   $n_{\Match}$ is the number of subjects in the matched set, and
\begin{equation*}
C(n,\bm{n_T},\bm{n_C})=\frac{n}{n_T^2}\sum_m
  (\ncm+\ntm) \max\left(1,\frac{\ntm}{\ncm}\right)^2.
\end{equation*}
Equivalently,
\begin{equation*}
\left(\frac{bias(\est)}{SD(y_C)}\right)^2\le(1-R^2_{\Match})\times C(n,\bm{n_T},\bm{n_C})
\end{equation*}
Where $SD(y_C)$ is the sample standard deviation of $y_C$ in the matched set and $R^2_{\Match}$ is the prediction $R^2$ in the matched set, $1-\sum_{i\in \text{matched}} (y_{Ci}-\hat{y}_{Ci})^2/\sum_{i\in \text{matched}} (y_{Ci}-\bar{y}_{Cmatched})^2$.
\end{prop}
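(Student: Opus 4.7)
The plan is to turn $bias(\est)$ into a single sum indexed by matched units, apply Cauchy--Schwarz to separate the control-side residuals from the match-structure coefficients, and then bound the coefficient piece by $C(n,\bm{n_T},\bm{n_C})$ via a short algebraic identity. The first step is to derive an explicit expression for $bias(\est)$ in terms of $r_i \equiv \yci - \yhati$ and the within-match assignment probabilities $\pi_i = \Pr(Z_i=1\mid\bm{\Match})$. Writing $e_i = r_i + \tau_i Z_i$, substituting into $t_m(e,Z)$, taking the expectation over $\bm{Z}$, and cancelling the $\tau_{ETT}$ contribution exactly as in Proposition~\ref{prop:noBias} yields
\begin{equation*}
bias(\est) \;=\; \sum_m w_m \sum_{i:\Match_i=m} r_i \left(\frac{\pi_i}{\ntm}-\frac{1-\pi_i}{\ncm}\right),
\end{equation*}
which is Lemma~\ref{prop:matchBias} applied to the residuals $r$ rather than to $y_C$.

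Next I would apply Cauchy--Schwarz to this double sum, pairing $r_i$ against the structural coefficient $w_m(\pi_i/\ntm - (1-\pi_i)/\ncm)$ at each matched unit. That gives
\begin{equation*}
bias(\est)^2 \;\le\; \left(\sum_{i\in \mathcal{S}_{\Match}} r_i^2\right)\left(\sum_m w_m^2 \sum_{i:\Match_i=m}\left(\frac{\pi_i}{\ntm}-\frac{1-\pi_i}{\ncm}\right)^{2}\right),
\end{equation*}
whose first factor equals $n_{\Match}\cdot MSE_{\Match}$ by definition.

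To bound the second factor by $C(n,\bm{n_T},\bm{n_C})/n_{\Match}$ (reading $n$ as $n_{\Match}$), I would note that $\pi\mapsto \pi/\ntm - (1-\pi)/\ncm$ is linear in $\pi\in[0,1]$ with endpoints $-1/\ncm$ and $1/\ntm$, so each squared term is at most $1/\min(\ntm,\ncm)^2$ and the inner sum is at most $(\ntm+\ncm)/\min(\ntm,\ncm)^2$. Substituting $w_m=\ntm/n_T$ and invoking the algebraic identity
\begin{equation*}
\frac{\ntm^2}{\min(\ntm,\ncm)^2} \;=\; \max\!\left(1,\frac{\ntm}{\ncm}\right)^{2}
\end{equation*}
(immediate by checking the two cases $\ntm\le\ncm$ and $\ntm>\ncm$) collapses the right factor to $(1/n_T^2)\sum_m(\ntm+\ncm)\max(1,\ntm/\ncm)^2$, delivering the first claimed inequality. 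The equivalent $R^2$-form is obtained by dividing both sides by $SD(y_C)^2$ and using $R^2_{\Match}=1-MSE_{\Match}/SD(y_C)^2$.

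The main obstacle is the first step: cleanly producing the explicit bias formula from the decomposition $\est=\hat{\tau}_\Match(\bm{Y})-\hat{\tau}_\Match(\bm{\hat{y}}_C)$ and Lemma~\ref{prop:matchBias}, including careful bookkeeping of which expectations are over $\bm{Z}$ versus conditional on the match. Once that identity is in hand, the remainder is a single Cauchy--Schwarz application plus the one-line $\max$--$\min$ identity above.
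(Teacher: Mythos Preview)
Your proposal is correct and matches the paper's proof essentially step for step: the paper likewise invokes Lemma~\ref{prop:matchBias} with $\bm{y}_C$ replaced by $\bm{y}_C-\hat{\bm{y}}_C$, applies Cauchy--Schwarz to the inner product of the residual vector against the coefficient vector $Q_i=w_{\match_i}(P_i/n_{T\match_i}-(1-P_i)/n_{C\match_i})$, and then bounds $|Q_i|\le w_{\match_i}\max(1/n_{T\match_i},1/n_{C\match_i})$ before simplifying via $w_m=\ntm/n_T$ to obtain $C(n,\bm{n_T},\bm{n_C})$. Your observation that the $n$ in $C(n,\bm{n_T},\bm{n_C})$ is really $n_{\Match}$ is also consistent with how the paper's proof uses it.
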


A proof of proposition~\ref{biasBound} appears in the Appendix.

\begin{remark}\label{biasBoundRemark}
In a pair-matching design $C(n,\bm{n_T},\bm{n_C})=4$.

\end{remark}

Therefore, the bias of $\est$ can be bounded as a function of the
average squared error of the prediction algorithm in the matched set.
Were it possible to perfectly predict all subjects'
$y_C$
values, their treatment effects could be estimated unbiasedly
(exactly, in fact).
More broadly, Proposition \ref{biasBound} suggests that prediction
algorithms need not be based on a correct model to yield
estimates with low bias.
They must merely be accurate, on average.
This, in turn, suggests that machine learning algorithms, whose
central purpose tends to be prediction, can serve well as
residualization mechanisms.

In practice, the bounds in Proposition \ref{biasBound} are unobservable, since they involve control potential outcomes in the matched set, which are only observable for the matched controls.
Further, since the prediction algorithm $\algorithm$ is fit in the remnant, the bounds are not directly estimable without strong assumptions.
But based on cross-validation estimates of $MSE_{\text{remnant}}$ and $R^2_{\text{remnant}}$, and an assessment of $\algorithm$'s sensitivity to extrapolation from proximal validation, researchers can formulate reasonable guesses as to the values of $MSE_{\Match}$ and $R^2_{\Match}$.

Proposition \ref{biasBound} assumes nothing about subjects' respective
probabilities of treatment assignment within matches.
In particular, it allows for a situation in which some subjects may
be assigned to treatment with probability 1---this is a rather extreme
violation of the stratified randomization assumption (\ref{assumption:experiment}).
Under weak assumptions about the distribution of treatment
assignments, the bound in Proposition~\ref{biasBound} may be
considerably tightened.
For instance, \citet{rosenbaum:2002} suggests a general model for sensitivity
analysis for observational studies: the assumption that for some $\Gamma\ge
1$, if
$\match_i=\match_j$---that is, $i$ and $j$ are in the same matched
set---and $P_i=Pr(Z_i=1)$ and $P_j=Pr(Z_j=1)$, then
\begin{equation}\label{eq:rbounds}
  \frac{1}{\Gamma}\le\frac{P_i(1-P_j)}{P_j(1-P_i)}\le \Gamma.
\end{equation}
That is, for matched subjects $i$ and $j$, the ratio of the odds that
$i$ is selected for treatment to the odds that $j$ is selected is
bounded by $1/\Gamma$ and $\Gamma$.
The following proposition uses the framework in (\ref{eq:rbounds}) to
tighten the bound in Proposition~\ref{biasBound} in the simple case of a
matched-pair design; an analogous result may hold for more
complex designs, but we leave such an extension for future work.

\begin{prop}\label{prop:gammaBound}
  In a pair-matching design, if (\ref{eq:rbounds}) holds for some
  $\Gamma\ge 1$, then
  \begin{equation*}
    bias(\est)^2\le
    MSE_{\Match}\times 4\left(\frac{\Gamma^{1/2}-1}{\Gamma^{1/2}+1}\right)
  \end{equation*}
  Equivalently,
  \begin{equation*}
   \left(\frac{bias(\est)}{SD(y_C)}\right)^2\le(1-R^2_{\Match})\times 4\left(\frac{\Gamma^{1/2}-1}{\Gamma^{1/2}+1}\right)
   \end{equation*}
\end{prop}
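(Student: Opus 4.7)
The plan is to exploit the additive decomposition $\est = \hat{\tau}_\Match(\bm{Y}) - \hat{\tau}_\Match(\bm{\hat{y}}_C)$ from Proposition \ref{prop:biasSum} together with the local-randomization framework (\ref{eq:rbounds}). The pair design is especially convenient: each matched set has exactly one treated and one control, so the per-pair bias contribution admits a closed form in terms of two ingredients, a within-pair imbalance in assignment probabilities and a within-pair residual contrast $e_{C1m} - e_{C2m}$, with $e_{Ci} = y_{Ci} - \hat{y}_{Ci}$.

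First, I would set $\tilde{P}_1 = \Pr(Z_{1} = 1 \mid \bm{\Match}, \text{one treated in }m)$, $\tilde{P}_2 = 1 - \tilde{P}_1$, and $\delta_m = \tilde{P}_1 - \tfrac{1}{2}$. Expanding $\EE_Z[t_m(e,Z)]$ and subtracting the pair's contribution to $\tau_{ETT}$, namely $\tilde{P}_1 \tau_1 + \tilde{P}_2 \tau_2$, the treated-potential-residual terms cancel through $\tau_i = e_{Ti} - e_{Ci}$, leaving
\begin{equation*}
\EE_Z[t_m(e,Z)] - (\tilde{P}_1\tau_1 + \tilde{P}_2\tau_2) = 2\delta_m(e_{C1m} - e_{C2m}).
\end{equation*}
Averaging over the $N$ pairs with weights $w_m = 1/N$ yields the clean identity $bias(\est) = \tfrac{1}{N}\sum_m 2\delta_m(e_{C1m}-e_{C2m})$.

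Second, I would convert (\ref{eq:rbounds}) into a constraint on $\delta_m$. Under conditioning on exactly one treated per pair, the Rosenbaum odds ratio $P_i(1-P_j)/(P_j(1-P_i))$ equals the conditional odds $\tilde{P}_1/\tilde{P}_2$, so $\tilde{P}_1 \in [1/(\Gamma+1),\,\Gamma/(\Gamma+1)]$ and $|2\delta_m|$ is controlled by a simple function of $\Gamma$. A Cauchy--Schwarz step on $\sum_m 2\delta_m(e_{C1m}-e_{C2m})$, combined with the elementary estimate $(e_{C1m}-e_{C2m})^2 \le 2(e_{C1m}^2 + e_{C2m}^2)$, reassembles the residual terms into $n_\Match \cdot MSE_\Match = 2N \cdot MSE_\Match$ and produces a bound of the form $bias(\est)^2 \le 4\,h(\Gamma)\, MSE_\Match$. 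Dividing through by the in-match sample variance of $y_C$ then delivers the $R^2$ restatement.

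The main obstacle is achieving the exact constant $(\Gamma^{1/2}-1)/(\Gamma^{1/2}+1)$ rather than the looser factor that a naive Cauchy--Schwarz yields. The trick is to not apply Cauchy--Schwarz uniformly against $\max_m |2\delta_m|$, but to couple the factor $|2\delta_m|$ with the corresponding $|e_{C1m}-e_{C2m}|$ in a sharper way—for instance, by rewriting $2\delta_m = (\sqrt{\tilde{P}_1}-\sqrt{\tilde{P}_2})(\sqrt{\tilde{P}_1}+\sqrt{\tilde{P}_2})$ and bounding the first factor via $(\Gamma^{1/2}-1)/(\Gamma^{1/2}+1)^{1/2}$ while absorbing $(\sqrt{\tilde{P}_1}+\sqrt{\tilde{P}_2})$ into the residual sum. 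With that refined Schwarz-type manipulation the stated bound falls out, and the $R^2$ form then follows by the same division as in Proposition \ref{biasBound}.
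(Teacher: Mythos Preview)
Your bias decomposition $bias(\est)=N^{-1}\sum_m 2\delta_m(e_{C1m}-e_{C2m})$ is correct and is exactly what the paper uses (in the notation of Proposition~\ref{biasBound}, $Q_i=w_m(2P_i-1)$ in a pair). The gap is in how you read (\ref{eq:rbounds}). In the paper's framework the probabilities $P_i$ are already conditional on the matched structure (see Lemma~\ref{prop:matchBias}: $P_i=\Pr(Z_i=1\mid n_{Tm},n_{Cm})$), so in a pair with exactly one treated unit, $P_j=1-P_i$ automatically. Substituting this into (\ref{eq:rbounds}) gives
\[
\Gamma^{-1}\le \frac{P_i(1-P_j)}{P_j(1-P_i)}=\frac{P_i^2}{(1-P_i)^2}\le \Gamma,
\]
i.e.\ $P_i/(1-P_i)\in[\Gamma^{-1/2},\Gamma^{1/2}]$, and hence $|2\delta_m|=|2P_i-1|\le(\Gamma^{1/2}-1)/(\Gamma^{1/2}+1)$ directly. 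That is where the square root comes from; no ``refined Schwarz-type manipulation'' is needed. Your assertion that the Rosenbaum odds ratio equals the conditional odds $\tilde P_1/\tilde P_2$ would be right if the $P_i,P_j$ in (\ref{eq:rbounds}) were \emph{marginal} probabilities of independent indicators, but that is not the convention here; it leads you to the wrong range $\tilde P_1\in[1/(\Gamma+1),\Gamma/(\Gamma+1)]$ and then to a vague factorization $2\delta_m=(\sqrt{\tilde P_1}-\sqrt{\tilde P_2})(\sqrt{\tilde P_1}+\sqrt{\tilde P_2})$ that does not in fact recover the stated constant.

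Once you have $|2\delta_m|\le(\Gamma^{1/2}-1)/(\Gamma^{1/2}+1)$, simply bound $|Q_i|\le n_T^{-1}(\Gamma^{1/2}-1)/(\Gamma^{1/2}+1)$ and rerun the Cauchy--Schwarz step of Proposition~\ref{biasBound} verbatim. (Doing so actually delivers the bound with the \emph{square} of the displayed factor; since that factor is at most $1$, the first-power form in the proposition follows as an immediate relaxation.)
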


A proof of Proposition~\ref{prop:gammaBound} is given in the Appendix.
\begin{remark}
  For $\Gamma=6$, which \citet[][p. 114]{rosenbaum:2002} characterized
  as ``a high degree of insensitivity to hidden bias,''
  $4\left(\frac{\Gamma^{1/2}-1}{\Gamma^{1/2}+1}\right)\approx 1.7.$
  That is, a very weak assumption about the balance of treatment
  assignment probabilities in a matched pair design constricts the
  bound in Proposition~\ref{biasBound} by more than half. If
  $\Gamma=3$, the multiplier on $(1-R^2_{\Match})$ is
  approximately one. On the other hand, as $\Gamma\rightarrow \infty$,
  the multiplier approaches 4, as in Remark \ref{biasBoundRemark}.
\end{remark}

Propositions \ref{biasBound} and \ref{prop:gammaBound} show that by using data from the remnant and covariate matrix $\covMat$ to predict potential outcomes $y_C$, researchers can substantially bound the the bias of their treatment effect estimates.
The closer the estimates are to the true values, on average, the lower the bound on the bias---the algorithm $\algorithm$ need not be correct in any sense, only predictive.

\section{A Simulation Study}\label{sec:sim}

This section presents a simulation study with two principal goals: to
demonstrate rebar's potential to improve upon matching
estimators under a variety of circumstances, and rebar's ability to
interact with, and improve upon, a variety of matching designs and
estimators.
A second, smaller study examines rebar's performance under
pathological circumstances.

\subsection{Data Generating Models}\label{sec:sim-data-generating}
The study imagined a researcher estimating the effect of a treatment
$Z$ on an outcome $Y$, using a sample of $n=$400 subjects, in
the presence of $p=$600 covariates.
While all of the covariates were potential confounders, the simulated researcher knew that five of the covariates---the first
five columns of covariate matrix $\covMat$---predict both $y_C$ and
$Z$; prior background knowledge provided little guidance regarding the
remaining 595.

The outcomes $y_C$ were generated as a linear function of a
fixed covariate vector $\covVec_i$:
\begin{equation}\label{eq:outcome}
y_{Ci}=\bm{1 }^T x_{i, 1:5}+\bm{\beta }^T x_{i,6:600}+\epsilon_i
\end{equation}
where the coefficients $\bm{\beta}$ were drawn from an exponential
distribution with a rate of $\lambda=5$ and $\epsilon$ was drawn from a standard normal distribution.
 A ``treated'' group was selected
 according to probabilities
\begin{equation}\label{eq:treatment}
Pr(Z_i=1|\covVec_i)=\mathrm{logit}^{-1}(\alpha^*+\bm{1 }^Tx_{i,
  1:5}+\kappa\bm{\beta }^Tx_{i,6:600}).
\end{equation}
That is, the log odds of treatment assignment were linear in covariates.
We chose the parameter $\alpha^*$ in such a way that, on average,
$n_T=$50 were treated.
As in (\ref{eq:outcome}), the coefficients for the first five columns of $\covMat$ in (\ref{eq:treatment}) were all set equal to 1.
The coefficients of the other 595 columns in  (\ref{eq:treatment}) were the same as in (\ref{eq:outcome}), multiplied by a factor $\kappa$ which varied between simulation runs.

The factor $\kappa$ controlled the amount of confounding after matching.
When $\kappa=0$, only the first five columns of $\covMat$ predicted $Z$,
so estimates from a match based on those covariates were approximately unconfounded.
When $\kappa>0$, every column of $\covMat$ predicted both $Z$ and $y_C$, and therefore confounded matching estimators that used only the first five columns of $\covMat$.
As $\kappa$ increased, so did the magnitude of the bias due to
confounding after the match; the three values we assigned,
$\kappa=$0, 0.1, 0.5, roughly correspond
to zero, low, and high unmatched confounding.

A second parameter, $\rho$, controlled the covariance structure of
$\covMat$, effectively controlling the ease of predicting $y_C$ as a
function of $\covMat$.
In this simulation, $\rho=$0, 0.004, and 0.05.
The rows of $\covMat$ were generated from a $p=$600-dimensional
multivariate normal distribution, with a random
covariance matrix whose eigenvalues we
specified (it was generated with \verb|R| code of \citealt{randomMatrix}).
We set these eigenvalues $ev_k$, $k=1,...,$600, to decay exponentially: $ev_k=\exp\{-\rho k\}$.
When $\rho=0$, all eigenvalues were unity, and the columns of $\covMat$ are
uncorrelated.
As $\rho$ increased, the columns of $\covMat$ became increasingly
correlated: there was low-dimensional structure in $\covMat$.
Prediction algorithms typically perform better when high-dimensional $\covMat$ can be summarized with a low-dimensional structure.
During the simulation we recorded the estimated
prediction $R^2$ from the cross-validation, and models fit to $\covMat$
with higher $\rho$ fit substantially better.

Covariates $\covMat$ and coefficients $\bm{\beta}$ varied between
scenarios (one random matrix $\covVec$ for each value of $\rho$ and one
random vector $\bm{\beta}$ for each value of $\kappa$)
but were held fixed across simulation runs within scenarios.
Outcomes $Y$ and treatment assignments $Z$ were generated anew in each
simulation run.
Each run, all ten effect estimates were computed using the same data.

\subsection{Treatment-Effect Estimators}\label{sec:sim-estimators}

In each simulation run, we constructed four matches.
Each of these matches, in turn, gave rise to two or three
treatment-effect estimates; all in all, we compared 10 different estimators.
These are summarized in Table \ref{tab:estimators}.

\begin{table}
  \begin{tabular}{ccc}
    \textbf{Matching Method}&\textbf{ Matching Variables} &\textbf{ Adjustment Method(s)}\\
    \hline
    \hline
    Optimal pair matching & $\covMat_{1:5}$& Rebar\\
    \hline
    Nearest Neighbor & $\covMat_{1:5}$& \makecell{Bias adjusted\\
    bias adjusted+rebar}\\
    \hline
    Coarsened Exact Matching & $\covMat_{1:5}$ &\makecell{Within-sample
                               OLS\\within-sample OLS+rebar}\\
    \hline
    High-dimensional pair match & $\covMat$ & Rebar\\
    \hline
 \end{tabular}
 \caption{Summary of the matching and estimation methods in the
   simulation study.}
 \label{tab:estimators}
\end{table}

%\subsubsection{Optimal Propensity Score Pair Matching}
\textbf{Optimal Propensity Score Pair Matching}\\
We estimated propensity scores using logistic regression, with $\bf{Z}$ regressed on the matching covariates, the first five columns of $\covMat$.
Using these propensity scores, we constructed an optimal pair match
without replacement---each treated subject was matched to a unique
control subject in such a way that the total distance in propensity
scores between matched subjects was minimized. (We used the \texttt{optmatch} package in \texttt{R}
[\citealt{hansen2007optmatch}] and chose pair matching strictly for
ease of interpretation; %, not because it is the best or most easily generated without-replacement matching structure; for instance, the
the application of \S~\ref{sec:example} uses \texttt{optmatch} to pair
each treated subject to 1--4 controls.)
We first estimated treatment effects via (\ref{MatchingEstimator}), the
average difference in $Y$ between treated subjects and their matched
controls, without adjustment from an outcome model.

Next, we computed rebar-adjusted estimates.
With the remnant from the pair-match as a training set,  we used a
combination of lasso \citep{tibshirani1996regression} and random forests
\citep{breiman2001random} to construct $\algorithm$, a predictor of control
potential outcomes $y_C$ as a function of the entire covariate matrix $\covMat$.
We implemented these in \texttt{R} with the
\texttt{glmnet} and \texttt{randomForest} packages
\citep{glmnet,rfCite}, and tuned and combined them with the
\texttt{SuperLearner} package \citep{superlearner} to minimize
mean-squared-error.
As outlined in Section~\ref{sec:rebarIntro}, we used the fitted
$\algorithm$ to construct predictions $\yhat$ and prediction errors
$e$ in the matched set, and estimated the treatment effect as in
equation (\ref{estimator}).

\textbf{Nearest-Neighbor Propensity Score Matching}\\
%\subsubsection{Nearest-Neighbor Propensity Score Matching}
Using the same propensity scores as in the optimal pair match,
we constructed a ``nearest-neighbor'' match, as proposed by \citet{nearestNeighbor},
and implemented by the \texttt{Matching} package in \texttt{R} \citep{matching}.
We used the ``ATT'' estimator of \citet{nearestNeighbor} to estimate
the average of the differences between each treated subject's outcome
and the average outcome of its matched controls.
Next, we computed the ``bias adjusted'' estimator suggested in
\citet{biasAdjust}, using an ordinary least squares (OLS) outcome model fit
to the matched sample.\footnote{\citet{biasAdjust} in fact suggest a more
  complicated regression routine that includes non-linear terms and
  interactions as the sample size grows, but in practice implement the
  routine with OLS; the \texttt{Matching} package similarly uses OLS.}
Since OLS cannot be fit when the number of covariates exceeds the
sample size, we used only the matching covariates for the bias adjustment.
Finally, we combined this within-sample bias adjustment with rebar.
As in optimal pair matching, we fit the lasso/random
forest/SuperLearner algorithm to data from the remnant
of the nearest neighbor match, predicting $y_C$ as a function of the entire matrix
$\covMat$, and computed $\yhat$ and $e$ in the matched set.
To estimate effects with both within-sample and rebar adjustment, we
substituted $e$ for $Y$ in the bias-adjusted estimator.

%\subsubsection{Coarsened Exact Matching}
\textbf{Coarsened Exact Matching}\\
We constructed a coarsened exact match, as described
in \citet{cem} and implemented in \texttt{R} with the \texttt{cem}
package \citep{cemR}.
We coarsened each of the first five columns of $\covMat$ with five
bins, matched exactly on the coarsened covariates, and estimated
treatment effects via (\ref{MatchingEstimator}).
Next, we constructed a within-sample adjusted estimator along the
lines of \citet{ho2007matching}: using only data from the matched
sample, we regressed $Y$ on $Z$ and the first five
columns of $\covMat$, and recorded the coefficient on $Z$.
Finally, we combined the within-sample adjustment with rebar.
As in the optimal pair and nearest neighbor analyses,
we used data from the remnant to fit a lasso/random
forest/SuperLearner algorithm predicting $y_C$ as
a function of the entire $\covMat$, and generated predictions $\yhat$
and errors $e$ in the matched set.
To estimate effects, we regressed $e$ on $Z$ and the first five
columns of $X$, and recorded the coefficient for $Z$.

%\subsubsection{High-Dimensional Pair Match}
\textbf{High-Dimensional Pair Match}\\
The first three matching designs, optimal pair matching, nearest
neighbor matching, and coarsened exact matching, used only the first
five columns of $\covMat$---the known confounders.
However, when presented with a set of $p=600$ covariates, many
real-world researchers would not stop at the first five.
Instead, they would try to incorporate additional covariates into their matches.
The resulting iterative process of matching and balance checking is
difficult or impossible to simulate; however, there are
a number of automatic machine learning algorithms for estimating
probabilities in high-dimensional spaces \citep[e.g.][]{twang,highDimPS}.
In this vein, in parallel to the rebar prediction model $\algorithm$,
we estimated high-dimensional propensity scores with random forest
classification and lasso logistic regression, tuned and combined via the
SuperLearner.
We used these high-dimensional propensity scores to construct a second
optimal pair match.
As in the conventional pair match, we estimated effects using equation
(\ref{MatchingEstimator}) and, fitting algorithm $\algorithm$ to the
remnant, we computed a rebar estimate.

\subsection{Simulation Results}\label{sec:sim-results}

%\begin{landscape}

\begin{figure}
\centering
\begin{knitrout}
\definecolor{shadecolor}{rgb}{0.969, 0.969, 0.969}\color{fgcolor}
\includegraphics[width=\maxwidth]{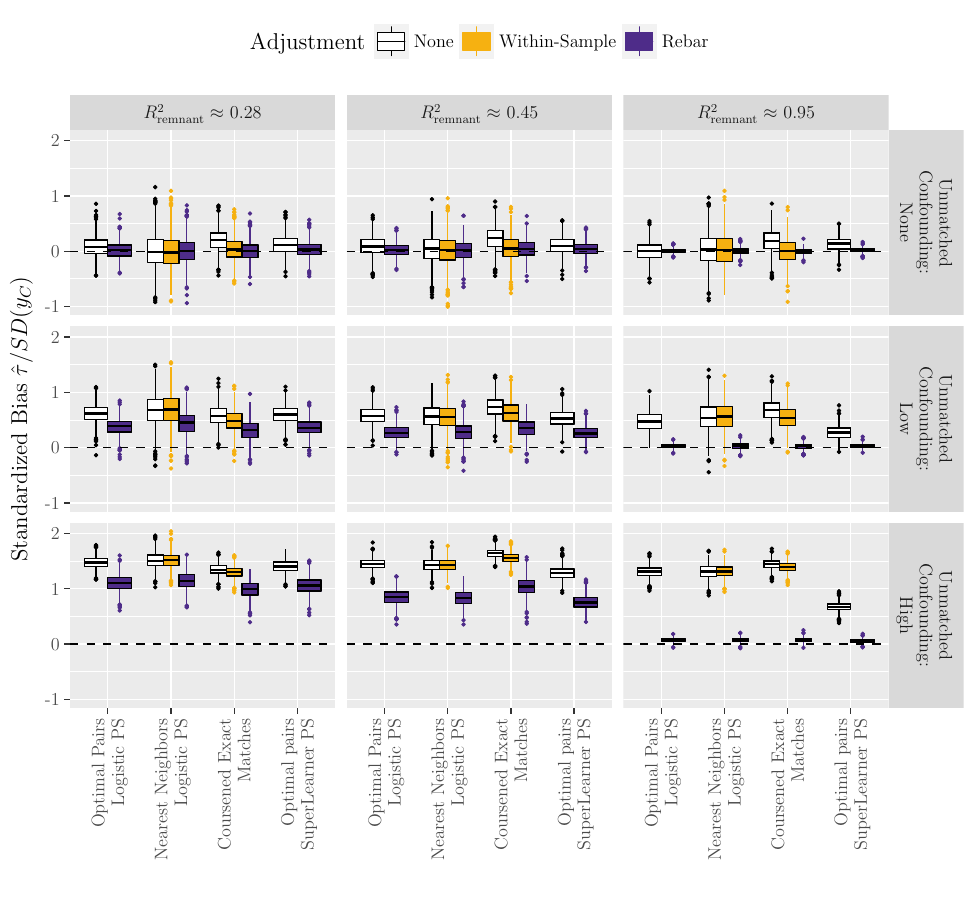} 

\end{knitrout}
\caption{Boxplots of treatment effect estimates from 1000
  simulation runs under the data generating models in Section
  \ref{sec:sim-data-generating}. The true treatment effect of zero is indicated by a
  horizontal dotted line. The estimated treatment effects were
  divided by the standard deviation of $y_C$. % to ease interpretation.
  The matching and outcome adjustment methods are described in Section
  \ref{sec:sim-estimators} and Table \ref{tab:estimators}; the rebar
  adjustments to the nearest neighbor and coarsened exact match were
  done alongside within-sample adjustment.%, optimal pair
  %matching, nearest-neighbor matching, coarsened exact
  %matching, and optimal pair matching using machine-learning
  %propensity scores, were either unadjusted or adjusted with rebar
  %alone, with within-sample adjustment, or both.
The nine simulation scenarios, described in
Section~\ref{sec:sim-data-generating}, are arranged in a matrix, with
rows for $\kappa=$0, 0.1, and 0.5,
% denoting levels of confounding after the match,
and columns for $\rho=$0, 0.004, and
0.05. %, from left to right, denoting correlation structure
                 %between covariates.
The $R^2_{\text{remnant}}$ values listed are averages of prediction $R^2$ for $\algorithm$ estimated using cross-validation within the remnant.}
\label{fig:simulationResults}
\end{figure}
%\end{landscape}

Figure \ref{fig:simulationResults} shows the results of the simulation, after 1000 simulation runs.
Each row of Figure \ref{fig:simulationResults} corresponds to a value of $\kappa$; in the first row, $\kappa=$0, corresponding to no confounding from the covariates not used in the match, in the second row $\kappa=$0.1, corresponding to moderate confounding from the left-out covariates, and in the third row $\kappa=$0.5, corresponding to a high degree of confounding.
Each column of Figure \ref{fig:simulationResults} corresponds to a different value of $\rho$: 0, 0.004, and 0.05.
These correspond to datasets increasingly amenable to prediction
algorithms; the top of the figure lists the average cross-validation
$R^2_{\text{remnant}}$ of $\algorithm$ fit in the remnants from the
pair matches.
Each panel of Figure \ref{fig:simulationResults} displays boxplots of the
ten treatment effect estimates, divided by the standard deviation of
$y_C$.
% : four matching estimates, from optimal pair, nearest-neighbor,
%  coarsened exact, and high-dimensional pair matching. Nearest Neighbor and coarsened exact
% matching estimates are shown alongside estimates incorporating within-sample bias adjustments.
% Rebar estimates are shown for each matching design: for optimal and
% high-dimensional pair
% matching, rebar is the only outcome model adjustment; for
% nearest-neighbor and coarsened exact matching, rebar and within sample adjustments are combined.

A number of patterns are apparent.
When $\kappa=0$, the covariates not used in the match did not pose a
confounding threat, and all the estimators % (with the slight exception of
%coarsened exact matching)
were approximately unbiased.
Rebar reduced the variance of the effect estimates, subtly for the first two columns and dramatically in the third.
As $\kappa$ increased, all effect estimates became increasingly biased.
However, rebar substantially reduced the bias.
Rebar was similarly effective when used on its own and when used in conjunction with within-sample outcome model adjustments---that is, rebar had quite a bit to add even after other adjustments.
Unsurprisingly, rebar's performance, both in terms of bias and
variance reduction, improved with higher $R^2_{\text{remnant}}$---the closer,
on average, the predictions $\hat{y}_C(\covMat)$ are to $y_C$ in the
remnant (and, presumably, in the matched set, too), the more good
rebar can do.

The high-dimensional propensity score match demonstrated that rebar
can improve upon designs that incorporate all of $\covMat$.
% In fact, in most cases the high-dimensional propensity score estimators performed worse than traditional estimators.
% While investigations into the causes of this behavior are beyond the
% scope of this paper, it is worth noting that matching diagnostics
% showed the high-dimensional propensity score model to be severely overfit.
% In any case, rebar substantially improved these results, lessening
% their bias to an extent similar to the other estimators.

This simulation study showed rebar's potential: rebar can substantially
reduce both the bias and the variance of a matching estimator,
especially in the presence of high-dimensional confounding and with an
accurate prediction algorithm.

\subsection{Rebar's Performance Under Non-Linearity}\label{sec:sim-non-linear}
\begin{figure}
\centering
\begin{knitrout}
\definecolor{shadecolor}{rgb}{0.969, 0.969, 0.969}\color{fgcolor}
\includegraphics[width=\maxwidth]{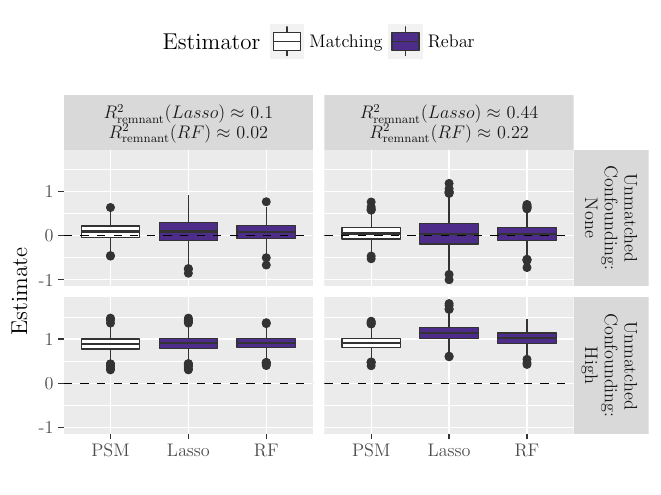} 

\end{knitrout}

\caption{Boxplots of standardized treatment effect estimates from 1000
  simulation runs under the data generating models in Section \ref{sec:sim-non-linear}. The true treatment effect, indicated by a
  horizontal dotted line, is zero. The methods are optimal pair
  matching (PSM) and rebar-adjusted optimal pair matching, with
  $y_C$ predicted using lasso or random forests (RF). The four simulation
  scenarios are arranged in a matrix, with rows for
  $\kappa=$0 and 0.5 %, denoting
  %levels of confounding after the match,
and columns for $\rho=$ 0 and 0.05. %, from left to right, denoting correlation structure between covariates.
The $R^2_{\text{remnant}}$ values listed are averages of
  prediction $R^2$ for $\algorithm$ estimated using
  cross-validation within the remnant for lasso and random forest.}
\label{fig:sim-bad}
\end{figure}

We conducted a parallel simulation study to investigate rebar's performance when the
distribution of $y_C$, conditional on $X$, differs greatly between the
remnant and the matched set.
Since it is the match that determines which subjects are in the
matched set and which are in the remnant, and the data generation
occurs prior to the match, we could not set the distribution of
$y_C$ in the remnant exactly.
Instead, we let the data generating model for $y_C$ vary with
$Pr(Z=1)$, subjects' probabilities of being treated.
To do so, we modified both the outcome model (\ref{eq:outcome}) and
the treatment model (\ref{eq:treatment}).
To select treated subjects, we chose those $2n_T$ with the highest
linear predictors, as defined in equation (\ref{eq:treatment}), and assigned
half to treatment.
That left an ``untreatable'' group of subjects with $Pr(Z=1)=0$.
For the untreatable subjects, $y_C$ was generated as in
(\ref{eq:outcome}).
For the $2n_T$ subjects with $Pr(Z=1)=0.5$, the outcomes were
generated as $\overline{\covVec\beta^*}-\covVec\beta^*+\epsilon$,
where $\beta^*$ is the concatenation of a vector of five $1$s with
$\beta$ and $\overline{\covVec\beta^*}$ is the sample average of
  all subjects' $\covVec\beta^*$.
Finally, we transformed $y_C$ to $-y_C$, so that the omitted variable
bias would be positive, as in Section~\ref{sec:sim-results}.
In this study, the relationship between $\covVec$ and $y_C$ for
subjects who could be treated was precisely the opposite of the
relationship for subjects who could not.
The worry here was that $\algorithm$ would be severely
misleading, if fit in the remnant and extrapolated to the matched set.

The simulation results suggest that this is, indeed, a concern---in
some cases.
Figure~\ref{fig:sim-bad} shows the results of rebar adjustment to
optimal pair matching using two different rebar algorithms $\algorithm$: lasso,
which depends on a linear model, and random forest, which does not.
Rebar adjustment with lasso worsened the bias and variance of the matching
estimator, slightly for lower $R^2_{\text{remnant}}$ values and considerably
for higher $R^2_{\text{remnant}}$.
On the other hand, rebar using random forests, which achieved much
lower $R^2_{\text{remnant}}$ values across the board, did little to no damage
to the matching estimator.
Apparently the matching routines were
unable, in general, to perfectly identify the treatable control subjects with
$Pr(Z=1)=0.5$, so both the remnant and the matched set contained
subjects with outcomes drawn from both outcome models.
By ignoring non-linearity, the lasso was able to fit the training
sample more closely than random forests did; but because of their
sensitivity to non-linearity, random forests extrapolated beyond the
remnant more reliably than the lasso.

In summary, under data generating models combining nonlinear responses with
limited propensity score overlap, rebar's performance depended on the
prediction algorithm.
Rebar adjustment via lasso increased the MSE
of the matching estimator, while rebar adjustment via random forest
caused little to no harm.   It is unclear whether latent nonlinearities sufficient to undercut the
lasso are to be expected in applications; proximal validation may flag
some such situations.

\section{Example Data Analysis: Evaluating Board Exam Systems}\label{sec:example}

Board Exam Systems (BES) comprise a class of similar comprehensive educational reforms.
BES are packages that a school can adopt: sets of rigorous curricula for all academic courses, corresponding sets of end-of-course exams, professional development and instructional guidance for teachers and systems of assistance for struggling students.
Though uncommon in the United States, BES are common around the world, and several research studies have suggested that they improve student achievement \citep{bishop1997effect,bishop2000curriculum, collier2009institutional}

Seven Arizona High Schools began implementing BES programs in the
2012--2013 school year: either the ACT Quality
Core program or the Cambridge program.
A pilot study sought to evaluate the results after one year, in part by
estimating the effects of the BES programs on 10th-graders' end-of-year standardized test
scores---specifically, the Arizona Instrument to Measure Standards, or AIMS.
Here we present a simplified version of the study's estimate of the effect of BES on school-average 10th-grade AIMS Reading scores.
The analysis we present here is intended to illustrate the rebar method, not to evaluate the effectiveness of BES programs in Arizona.

For Arizona high schools in our sample, we had four years of pre-treatment data.
That is, data from four cohorts of students who preceded the adoption
of BES---students set to graduate in 2011--2014.
For each cohort, we had the total enrollment, the percents of students who were male, white, black, Hispanic, other race or ethnicity, receiving free or reduced-price lunches (FRL), special education (SPED), and English language learners (ELL), in addition to average 8th Grade and 10th Grade AIMS scores on writing, reading, math and science.
We also had the percent of students in each cohort with missing AIMS English and Math scores.
From these data, we computed composite AIMS scores by averaging the
four AIMS components, and school ``trends'' for 10th grade math and reading scores: ordinary-least-squares slope estimates from the school-level regressions of school mean AIMS scores on a linear time variable.
From the US National Center for Education Statistics Common Core of
Data \citep{ccd}, we had a categorization of each school into one of 10 categories of urbanicity, ranging from urban to remote rural.
All in all, there were 90 covariates, for a total of 509 high schools.

\subsection{A Propensity Score Match}\label{sec:match}

We constructed a propensity score match to estimate treatment
effects.
Since there were only $n_{T}=7$ intervention schools, estimating
propensity scores with logistic
regression including all 90 predictors was not feasible.
%Moreover, more modern techniques for binary outcomes would also produce propensity score estimates that were either highly unstable---liable to change under small variations in the model or the data---or uninterpretable.
Instead, our propensity score model incorporated only a small subset
of the covariates, those that we believed would be most recognizable
as potential confounders to the end audience of the research.
% These included demographics and prior achievement measures, along with
% time trends.
Specifically, we regressed schools' BES status on the percent FRL, white, SPED, Hispanic, and average and percent missing 8th and 10th grade AIMS scores for students in the cohort immediately prior to BES implementation (those set to graduate in 2014) along with estimated school trends in English and Math AIMS scores.
Since this still gave more predictors than there were observations in
the treatment group, we expected that classical logistic regression
would fail to fit, so we instead used the Bayesian variant implemented in the \texttt{arm}
library for \texttt{R} \citep{arm,gelman2008weakly}.
%The resulting estimates, divided by treatment status, are displayed in Figure \ref{boxplot}

% next para quoted in full in reviewer response accompanying R3
We constructed optimal propensity-score matches, using the \verb|R|
\verb|optmatch| package \citep{hansen2007optmatch} to minimize paired
differences in the estimated log odds of assignment to treatment.
Given the relatively large pool of available comparison schools, we
disallowed the sharing of controls, as in nearest-neighbor matching or
full matching, while permitting multiple matches per treatment
schools. Rather than leaving the maximum number of matched comparisons
per treatment unspecified, we restricted it to 4, a restriction that
reduces the overall information content of the matched sample
\citep{cinarZubizarreta2016} only modestly relative to matching
without an upper limit on the number of matched controls per
treatment.  (Each matched set $\match$ makes a contribution to
effective sample size comparable to $h(n_{T\match}, n_{C\match})$
matched pairs, where
$h(n_{T\match},n_{C\match}) = \{\frac{1}{2}(n_{T\match}^{-1} + n_{C\match}^{-1})\}^{-1}$ is
the harmonic mean of $n_{T\match}$ and $n_{C\match}$
[\citealp{hansen2011propensity,cinarZubizarreta2016}].  For $n_{T\match}=1$
and $n_{C\match}\geq 1$, this contribution varies between 1 and 2, with
$h(1, 4)=1.6$.)  If this left plausible matches for some
treatment-group schools on the table, these eligible but unused
comparisons would enhance the value of proximal validation, improving
its ability to detect shortcomings of the extrapolation that underlies rebar.
%prev para quoted in full in reviewer response accompanying R1.

Table \ref{table:balance} displays covariate balance for the variables
in the propensity score model---standardized differences in covariate
means and Z-scores---before and after matching.
Covariate balance was assessed with the \verb|xBalance| routine in the
\verb|RItools| package from \verb|R| \citep{bowers2010ritools}.
The \verb|xBalance| routine also returns the results of omnibus
balance tests, for the full sample and the matched sample.
They returned p-values of 0.04
and 0.71, respectively.
Evidently, the propensity score match controlled some covariate
imbalance that was in the full sample.

\subsection{Rebar to Adjust the Match}

% latex table generated in R 3.3.1 by xtable 1.8-2 package
% Thu Sep 14 11:05:07 2017
\begin{table}[ht]
\centering
\begin{tabular}{lrrrr}
  \hline
  & \multicolumn{4}{c}{Std. Diff.} \\
 &\multicolumn{2}{c}{Unmatched}&\multicolumn{2}{c}{Matched}\\
 \hline
\% FRL & 1.06 & **  & 0.08 &     \\ 
  \% White & -0.97 & *   & 0.02 &     \\ 
  \% Sp.Ed. & -0.01 &     & -0.19 &     \\ 
  \% Hispanic & 1.34 & *** & 0.03 &     \\ 
  Urban & 0.24 &     & 0.13 &     \\ 
  avg. AIMS Writing (8th) & 0.31 &     & -0.10 &     \\ 
  avg. AIMS Reading (8th) & 0.42 &     & -0.18 &     \\ 
  avg. AIMS Math (8th) & 0.79 & *   & 0.06 &     \\ 
  avg. AIMS Reading (10th) & -0.55 &     & 0.14 &     \\ 
  avg. AIMS Math (10th) & -0.27 &     & 0.05 &     \\ 
  avg. AIMS Writing (10th) & -0.46 &     & -0.01 &     \\ 
  trend: AIMS English (10th) & -0.37 &     & 0.11 &     \\ 
  trend: AIMS Math (10th) & -0.42 &     & 0.10 &     \\ 
  \% AIMS Eng. Missing & -0.27 &     & -0.17 &     \\ 
  \% AIMS Math Missing & -0.20 &     & -0.22 &     \\ 
  $\yhat (\covVec)$ & -0.06 &     & 0.14 &     \\ 
   \hline
\end{tabular}
\caption{Standardized differences testing balance on
  covariates from the propensity score model and predictions $\hat{y}_C(\covMat)$ in the entire sample of schools and for the matched sample, conducted with the xBalance procedure.} 
\label{table:balance}
\end{table}

\subsubsection{Estimating  $\algorithm$}
% latex table generated in R 3.3.1 by xtable 1.8-2 package
% Thu Sep 14 11:05:07 2017
\begin{table}[ht]
\centering
\begin{tabular}{rrrrrr}
  \hline
 & Lasso & Random Forest & BayesLM & Ridge & Mean \\ 
  \hline
RMSE & 18.10 & 15.56 & 45.36 & 18.54 & 26.93 \\ 
  $R^2$ & 0.55 & 0.66 & -1.85 & 0.52 & -0.00 \\ 
  coefficient & 0.00 & 1.00 & 0.00 & 0.00 & 0.00 \\ 
   \hline
\end{tabular}
\caption{CV root-mean-squared error,  $R^2$, and ensemble learner weight from the SuperLearner. The seven models displayed are the lasso, random forest, a linear model with weak priors on the coefficients (``BayesLM''), ridge regression, and a grand mean model} 
\label{cvTab}
\end{table}

After setting aside the treated schools and their untreated matches, there were 483 schools in the remnant.
We considered four different predictive modeling strategies to
construct $\algorithm$: the lasso, random forests, ridge regression
\citep{hoerl1970ridge,MASS}, and linear regression with weak priors for
regularization \citep{arm},
 along with
grand-mean prediction, all combined via the
SuperLearner.
The SuperLearner uses cross validation to estimate the prediction mean-squared-error of each of the
modeling algorithms in a library.
Then, it constructs an ``ensemble learner,'' predicting new values as
a weighted average of the predictions from each of the algorithms,
with the weights determined by the cross-validation results.
These results are displayed in Table \ref{cvTab}.
The random forest dominated the other algorithms, with a
prediction $R^2$ of 0.66, to the
extent that its ensemble weight was
1.

\subsubsection{Proximal Validation}

\begin{figure}
\begin{knitrout}
\definecolor{shadecolor}{rgb}{0.969, 0.969, 0.969}\color{fgcolor}
\includegraphics[width=\maxwidth]{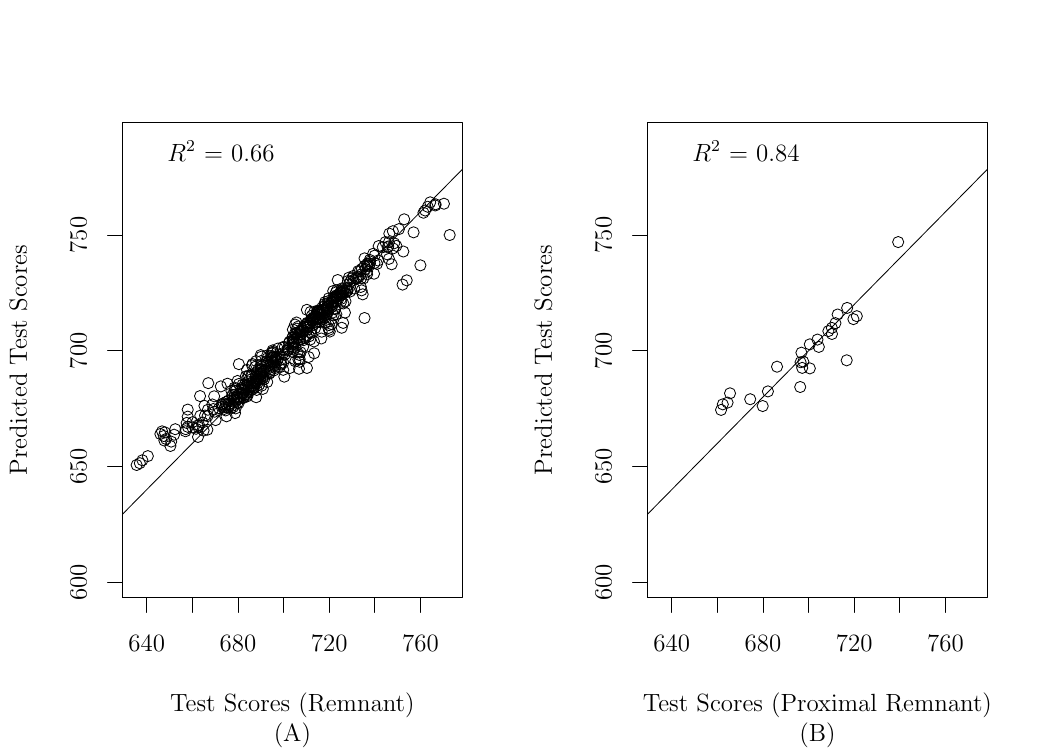} 

\end{knitrout}
\caption{SuperLearner prediction accuracy:
predictions ($\hat{y}_C(\covMat)$) as a function of real test scores. (A) gives
the results of the SuperLearner fit to, and tested against, the entire remnant. (B) shows
the proximal validation results: the performance of the SuperLearner fit in the distal portion of the remnant and tested against the proximal portion. The figures also contain the $y=x$
line for comparison.}
\label{fig:cv}
\end{figure}

To gauge how a model trained on the remnant might perform on the matched sample, we conducted proximal validation, described in Section~\ref{sec:remnant}.
First, we constructed a second match, $\match^{big}$, identical to the first, but allowing each treated subject to match at most 10 control subjects.
Match $\match^{big}$ included an additional
31 control schools in the
matched set---proximal schools---leaving 452 distal schools as a training set.
We trained the SuperLearner on the distal schools, and computed its
prediction accuracy against the proximal schools.
Somewhat surprisingly, the prediction models performed better when
trained on the distal schools and tested on the proximal schools than
when trained and tested on random subsets of the remnant in cross-validation.
This may be a result of sampling error, or the fact that the distal set contains a number of outlier schools whose AIMS reading scores are particularly hard to predict.
These schools will increase the estimated MSE reported by any validation method that includes them in its testing set.
If there are no outlier schools in the proximal set, proximal validation will not suffer from this difficulty.

As an additional check of the identification assumption (\ref{assumption:experiment}) for match $\match$, we tested balance on $\hat{y}_C(\covMat)$, in the same way as for other covariates: we tested if $\EE \bm{\hat{y}}_C^T\bm{Z}/n_T=\EE \bm{\hat{y}}_C^T(\bm{1}-\bm{Z})/n_C$.
The resulting p-value from the \verb|xBalance| routine was
0.5%$
; the balance test on $\hat{y}_C(\covMat)$ does not falsify (\ref{assumption:experiment}).

\subsubsection{Estimating Treatment Effects}

% latex table generated in R 3.3.1 by xtable 1.8-2 package
% Thu Sep 14 11:10:19 2017
\begin{table}[ht]
\centering
\begin{tabular}{rllll}
  \hline
 & Estimate & SE & p-value & 95\% CI \\ 
  \hline
PSM & 5.91 & 4.98 & 0.48 & (-10.4,22.53) \\ 
  rebar & 1.82 & 3.65 & 0.57 & (-5.41,12.17) \\ 
   \hline
\end{tabular}
\caption{The average treatment effect on the treated $\tau_{ETT}$, along with regression standard errors and permutational p-values and 95\%
confidence intervals, estimated with conventional propensity-score matching, as described in Section~\ref{sec:match}, and with rebar.} 
\label{table:results}
\end{table}

Finally, we calculated both $\tau_\Match$, the matching estimator
using $Y$, and $\est$, the rebar matching estimator; these are shown
in Table~\ref{table:results} along with HC3 standard errors.
To estimate p-values, we conducted permutation tests, permuting treatment indicators within matched sets and re-computing the estimates.
Ninety-five percent confidence intervals were estimated by inverting the permutation test, as in \citet[e.g.][]{rosenbaum2002covariance}.
Neither the conventional method nor rebar detected a statistically significant effect.
However, the rebar estimate resulted in a confidence interval with less than half the width of the conventional interval.

An anonymous reviewer suggested a post-hoc assessment of $\algorithm$'s
fit: estimating $R^2_\Match$ by comparing $y_C$ from within the
match to corresponding predictions $\hat{y}_C$.
The result was $\hat{R}^2_\Match=$0.71.

\section{Conclusion}\label{sec:conclusion}

In structural engineering, ``rebar'' abbreviates ``reinforcement
bar,'' a metal beam that is embedded in concrete. Concrete is resistant to compression, whereas rebar is resistant to tension; the combination of the two materials, rebar and concrete, is robust to a variety of threats.
Similarly, the rebar method of this paper complements the use of matching for confounder control.
Whereas matching typically focuses primarily on possible confounders'
associations with the treatment variable, and typically leaves some
subjects unmatched, rebar addresses bias by using the remnant from
matching, the unmatched controls, to model possible confounders'
associations with outcomes.
The predictions that result, $\yhat (\covVec)$, extract information about
subjects' control potential outcomes from the covariates $\covMat$.
The process of residualizing, that is, subtracting predictions
$\yhat (\covVec)$ from outcomes $Y$, can neutralize confounding from variables that the
match failed to balance.

Residualizing using the remnant confers these benefits without
compromising the statistical rationale for matching.
Indeed, matching supplemented with rebar inherits a number of central
attractions of the matching estimator.  For instance, researchers with
any level of statistical training can assess the success of the
matching procedure by examining matched units' comparability on
substantively meaningful baseline variables.  Although it typically
makes use of data from outside the range of common support---the set
of subjects $i$ for which $0<Pr(Z_i=1|\covVec_i)<1$---its final
estimate $\est$ compares only matched subjects, observing any common
support restrictions that the matching procedure observed.  The
procedure is compatible with postponing analysis involving outcomes
until the process of matching is complete, as recommended by
\citet{rubin2008objective}.  If matching succeeds in recreating a
latent experiment, where subjects matched to each other were assigned
to treatment randomly, then $\est$, like
$\hat{\tau}_\Match$, is unbiased.

Generating predictions $\yhat (\covVec)$ involves extrapolating from
the remnant to the matched sample; in some circumstances, the method could worsen the
quality of matched inferences. This risk is mitigated with the use of
cross-validation, to limit overfitting of the prediction model,
followed by proximate validation, which additionally detects biases
specific to extrapolation from lower- into higher-propensity score regions of
$\covVec$-space.  Both forms of validation are assisted by the presence of a
sizable matching remnant, including at least controls that would
have been suitable matches for some treatment group members.
While compatible with any method of matching that leaves a positive
fraction of the control reservoir unmatched, rebar is
particularly attractive in observational studies with
many more untreated than treated subjects.

We have focused on the capacity of rebar to reduce bias, but the
method may have other benefits as well.  For instance, the confidence
interval from a rebar analysis of the BES data had less than half the
width of the confidence interval from the corresponding matching
analysis.  Indeed, confidence interval widths and standard errors
generally vary inversely with the variance of the outcome.  Unless the
rebar extrapolation is sufficiently unstable as to worsen MSE ---
within the matched sample, the mean-square difference between rebar's
out-of-sample prediction and $Y$ exceeds the variance of $Y$ ---
confidence intervals based on $e$ are bound to be tighter than those
based on $Y$ alone.  In addition, studies with more stable outcomes
tend to have lower design sensitivity
\citep{rosenbaum2004design,zubizarreta2013effect}.  Barring
instability, the rebar analysis will be less sensitive to confounding
from unmeasured or unmodeled variables.  The relative stability of $e$
and $Y$ is reflected in the prediction $R^2$ of the rebar $\algorithm$
when applied to the matched set, for which cross-validation and
proximal validation can suggest a plausible range.

%\newpage

%\theendnotes

\bibliographystyle{asa}
\bibliography{rebar,morecites}

\section{Appendix: Proofs of Propositions \ref{biasBound} and \ref{prop:gammaBound}}

\subsection{The Bias of $\tau_\Match$}

\begin{lemma}\label{prop:matchBias}
In a matching design where the target of estimation is $\tau_{ETT}$, the bias of matching estimator (\ref{MatchingEstimator}) is
\begin{align*}\label{eq:matchBias}
\EE[\hat{\tau}_\Match (\bm{Y})]-\tau_{ETT}&= \sum_m w_m\bycm^T(\frac{\bm{p}_m}{\ntm}-\frac{\bm{1}-\bm{p}_m}{\ncm})
\end{align*}
where $\bycm$ is the vector of $y_C$ values for all subjects for whom $\match_i=m$: $\{\yci\}_{\match_i=m}$, and $\bm{p}_m$ is a vector of probabilities of treatment assignment for subjects in $m$, given $\ntm$ and $\ncm$: $P_i=Pr(Z_i=1|\ntm,\ncm)$.

\end{lemma}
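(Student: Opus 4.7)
\textbf{Proof plan for Lemma \ref{prop:matchBias}.} The plan is to compute $\EE[\hat{\tau}_\Match(\bm{Y})]$ directly from its definition and then subtract $\tau_{ETT}$. Throughout, the potential outcomes $\{(y_{Ti},y_{Ci})\}$ are treated as fixed and expectation is taken with respect to the random treatment assignment vector $\bm{Z}$, conditional on $\Match$ (equivalently on $\{n_{Tm},n_{Cm}\}$).

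First I would expand $Y_i = y_{Ti}Z_i + y_{Ci}(1-Z_i)$ inside $t_m(Y,Z)$, noting that $Y_iZ_i = y_{Ti}Z_i$ and $Y_i(1-Z_i) = y_{Ci}(1-Z_i)$. Taking expectations coordinate by coordinate gives $\EE[Y_iZ_i] = y_{Ti}P_i$ and $\EE[Y_i(1-Z_i)] = y_{Ci}(1-P_i)$, where $P_i = Pr(Z_i=1 \mid n_{Tm},n_{Cm})$. Hence
\begin{equation*}
\EE[t_m(Y,Z)] = \frac{1}{n_{Tm}}\sum_{i:\Match_i=m} y_{Ti}P_i - \frac{1}{n_{Cm}}\sum_{i:\Match_i=m} y_{Ci}(1-P_i).
\end{equation*}

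Next I would use the identity $y_{Ti} = y_{Ci} + \tau_i$ to split the first sum into a $y_{Ci}P_i$ piece and a $\tau_iP_i$ piece. Multiplying by $w_m = n_{Tm}/n_T$ and summing over $m$, the $\tau_iP_i$ contribution becomes
\begin{equation*}
\sum_m \frac{n_{Tm}}{n_T}\cdot\frac{1}{n_{Tm}}\sum_{i:\Match_i=m} \tau_iP_i = \frac{1}{n_T}\sum_i \tau_iP_i = \frac{\bm{\tau}^T\EE\bm{Z}}{n_T} = \tau_{ETT},
\end{equation*}
by the definition of $\tau_{ETT}$ in~\eqref{eq:ett}. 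The remaining pieces involve only $y_{Ci}$ and can be repackaged in vector form, since for each matched set $m$ the two leftover sums are exactly $\bycm^T \bm{p}_m / n_{Tm}$ and $\bycm^T(\bm{1}-\bm{p}_m)/n_{Cm}$.

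Subtracting $\tau_{ETT}$ from $\EE[\hat{\tau}_\Match(\bm{Y})]$ then yields the claimed expression. There is no real obstacle here: the lemma is essentially a bookkeeping identity, and the only thing to be careful about is making sure the $\tau_iP_i$ terms collapse cleanly to $\tau_{ETT}$ (which is what forces the particular choice of weights $w_m = n_{Tm}/n_T$) and that the conditioning set is consistent between $P_i$ and the weights, so that the potential outcomes can be treated as constants when the expectation is taken.
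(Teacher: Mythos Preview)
Your proposal is correct and follows essentially the same route as the paper's own proof: both expand $t_m(Y,Z)$ via $Y_i Z_i = y_{Ti}Z_i$ and $Y_i(1-Z_i) = y_{Ci}(1-Z_i)$, invoke $y_{Ti}=y_{Ci}+\tau_i$ to peel off a $\bm{\tau}_m^T\EE\bm{Z}_m/\ntm$ term that aggregates (with weights $w_m=\ntm/n_T$) to $\tau_{ETT}$, and identify the remainder as $\bycm^T(\bm{p}_m/\ntm-(\bm{1}-\bm{p}_m)/\ncm)$. The only difference is cosmetic---the paper stays in vector notation throughout while you work coordinate by coordinate before repackaging.
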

\begin{proof}
All of the following expectations are taken conditional on $n_{C1},...,n_{CM}$ and $n_{T1},...,n_{TM}$.
\begin{align*}
	\EE\hat{\tau}_\Match&=\EE\sum_m w_m t_m(Y,Z)\\
	&=\sum_m w_m \EE t_m(Y,Z)
\end{align*}
Next, for a particular match $m$,
\begin{align*}
  \EE t_m(Y,Z) &= \EE[\frac{1}{n_T}  \bm{Y}_m^T\bm{Z}_m-\frac{1}{\ncm}\bm{Y}_m^T(\bm{1}-\bm{Z}_m)]\\
  &= \EE[\frac{1}{n_T}  \bm{y_{Cm}}^T\bm{Z}_m-\frac{1}{\ncm}\bm{y_{Cm}}^T(\bm{1}-\bm{Z}_m)]+\EE\frac{\bm{\tau}_m^T\bm{Z}_m}{\ntm}\\
  &=\bm{y_{Cm}}^T\EE[\frac{1}{\ntm}\bm{Z}_m-\frac{1}{\ncm}(\bm{1}-\bm{Z}_m)]+\frac{\bm{\tau}_m^T\EE\bm{Z}_m}{\ntm}\\
  &=\bycm^T(\frac{\bm{p}_m}{\ntm}-\frac{\bm{1}-\bm{p}_m}{\ncm})+\frac{\bm{\tau}_m^T\EE\bm{Z}_m}{\ntm}
\end{align*}
Then note that $\sum_m w_m
\frac{\bm{\tau}_m^T\EE\bm{Z}_m}{\ntm}=\tau_{ETT}$.
\end{proof}

\subsection{Proof of Proposition \ref{biasBound}}
\begin{proof}
As in Lemma \ref{prop:matchBias}, the squared bias of $\est$ is
\begin{equation*}
bias^2(\est)=\left[ \sum_m
w_m(\bm{y_{Cm}}-\bm{\hat{y}_{Cm}})^T\left(\frac{\bm{P_m}}{\ntm}-\frac{\bm{1}-\bm{P_m}}{\ncm}\right)\right]^2.
\end{equation*}
Let $\bm{y}_C$ and $\hat{\bm{y}}_C$ be length-$n$ vectors,
concatenations of $\bm{y}_{Cm}$ and $\hat{\bm{y}}_{Cm}$. For $i=1,\cdots,n$ let
$Q_i=w_{\match_i}\left(\frac{P_i}{n_{T\match_i}}-\frac{1-P_{i}}{n_{C\match_i}}\right)$
and let $\bm{Q}$ be a concatenation of $\{Q_i\}$, a length-$n$ vector.
Since $0\le P_i\le 1$, $|Q_i|\le
\max\left(\frac{1}{n_{T\match_i}},\frac{1}{n_{C\match_i}}\right)w_{\match_i}$.
Then
\begin{align*}
  bias^2(\est)&=\left[(\bm{y_C}-\bm{\hat{y_{C}}})^T\bm{Q}\right]^2\\
  &\le n\frac{||\bm{y_C}-\bm{\hat{y}_C}||^2}{n} ||Q||^2
  \text{ by Cauchy-Schwartz}\\
  &\le n\frac{||\bm{y_C}-\bm{\hat{y}_C}||^2}{n} \sum_i \max\left(\frac{1}{n_{T\match_i}},\frac{1}{n_{C\match_i}}\right)^2w_{\match_i}^2\\
  &=\frac{||\bm{y_C}-\bm{\hat{y}_C}||^2}{n} \frac{n}{n_T^2}\sum_m
  (\ncm+\ntm) \max\left(1,\frac{\ntm}{\ncm}\right)^2
  \end{align*}
\end{proof}

\subsection{Proof of Proposition \ref{prop:gammaBound}}
\begin{proof}
  The proof follows the form of the proof of
  Proposition~\ref{biasBound}, but exploits the fact that $Q_i\le
  (\Gamma^{1/2}-1)/(\Gamma^{1/2}+1)/n_T$. This follows from two facts:
  first, in a matched pair design, if $i$ is matched to $j$ and $i\ne
  j$, $P_i=1-P_j$, so (\ref{eq:rbounds}) can be re-written as
  $1/\Gamma\le P_i^2/(1-P_i)^2\le \Gamma$. Secondly, in a matched pair
  design, the term $P_i/n_{T\match_i}-P_j/n_{C\match_j}$ can be
  written as $2P_i-1$. The result follows.
\end{proof}

\end{document}